\DeclarePairedDelimiter{\abs}{\lvert}{\rvert}
\theoremstyle{definition}
\newtheorem{thm}{Theorem}
\newtheorem{define}{Definition}
\newcommand{\no}{\nonumber}
\begin{document}
	\title{Asymptotic Limits of Privacy in Bayesian Time Series Matching}	
\author{\IEEEauthorblockN{Nazanin Takbiri}
	\IEEEauthorblockA{Electrical and\\Computer Engineering\\
		UMass-Amherst\\
		ntakbiri@umass.edu}
	\and
	\IEEEauthorblockN{Dennis L. Goeckel}
	\IEEEauthorblockA{Electrical and\\Computer Engineering\\
		UMass-Amherst\\
		goeckel@ecs.umass.edu}
	\and
	\IEEEauthorblockN{Amir Houmansadr}
	\IEEEauthorblockA{Information and \\Computer Sciences\\
		UMass-Amherst\\
		amir@cs.umass.edu}
	\and
	\IEEEauthorblockN{Hossein Pishro-Nik}
	\IEEEauthorblockA{Electrical and\\Computer Engineering\\
		UMass-Amherst\\
		pishro@ecs.umass.edu\thanks{This work was supported by National Science Foundation under grants CCF--1421957 and CNS--1739462.}}
	}
\maketitle

\begin{abstract}	
Various modern and highly popular applications make use of user data traces in order to offer specific services, often for the purpose of improving the user's experience while using such applications. However, even when user data is privatized by employing privacy-preserving mechanisms (PPM), users' privacy may still be compromised by an external party who leverages statistical matching methods to match users' traces with their previous activities. In this paper, we obtain the theoretical bounds on user privacy for situations in which user traces are matchable to sequences of prior behavior, despite anonymization of data time series. We provide both achievability and converse results for the case where the data trace of each user consists of independent and identically distributed (i.i.d.) random samples drawn from a multinomial distribution, as well as the case that the users' data points are dependent over time and the data trace of each user is governed by a Markov chain model. 

\end{abstract}

\begin{IEEEkeywords}
Anonymization, information theoretic privacy, Internet of Things (IoT), Markov chain model, statistical matching, Privacy-Preserving Mechanism (PPM).
\end{IEEEkeywords}


\section{Introduction}
\label{intro}
\IEEEPARstart{T}{he} Internet of Things (IoT) is an important emerging technology and is growing at a rapid pace: by 2020, over 50 billion devices will be connected together as part of the IoT network \cite{IoT2}. Environmental monitoring, infrastructure management, energy management, medical and healthcare systems, building and home automation, and transport systems are some examples which indicate that IoT devices will affect nearly every aspect of our daily lives. However, this ubiquity of impact also raises grave privacy concerns. In particular, each IoT user in each application is generating a sequence of data that can be modeled as a random process; for example, in location-based services, each user is generating location traces. These sequences of data in IoT systems often contain sensitive information about users, such as their locations, health information, and hobbies. As a result, such huge amount of data generated by IoT devices can critically damage users' privacy, thereby providing a significant obstacle to the adaption of IoT applications. Thus, IoT privacy has drawn the attention of the research community~\cite{FTC2015, 3ukil2014iot, 4Hosseinzadeh2014} to investigate effective privacy-preserving mechanisms (PPMs).

PPMs are used to increase the assurance that private data is not accessible to third parties. Two promising classes of PPMs are \emph{identity perturbation} and \emph{data perturbation}~\cite{hoh2005protecting,freudiger2007mix, Naini2016, soltani2017towards,soltani2018invisible, shokri2012protecting, gruteser2003anonymous, bordenabe2014optimal}. The identity perturbation technique or anonymization is the process of hiding the true identity of the data owner~\cite{hoh2005protecting,freudiger2007mix, Naini2016, soltani2017towards,soltani2018invisible}. This technique removes personal identifiers or converts personally identifiable information into aggregated data. The data perturbation or obfuscation is the process of hiding the users' data by adding noise~\cite{shokri2012protecting, gruteser2003anonymous, bordenabe2014optimal}. However, perturbation techniques reduce utility to provide better privacy protection; thus, obtaining the optimum levels of anonymization and obfuscation is important.

In~\cite{Naini2016, matching}, a comprehensive analysis of the asymptotic (in the length of the time series) optimal matching of time series to source distributions is presented in a non-Bayesian setting, where the number of users is a fixed, finite value.
However, in~\cite{nazanin_ISIT2017,tifs2016, ciss2017, Nazanin_IT, nazanin_ISIT2018, ISIT18-longversion, Nazanin_WCNC2019}, a Bayesian setting was adopted in which the adversary has accurate prior distributions for user behavior through past observations or other sources, and the asymptotic limits of user privacy were obtained.

In addition, Li et al.~\cite{KeConferance2017} provide an optimal hypothesis test in the case where the adversary has training sequences from the group of users rather than the exact probability distribution. 

In this paper, we adopt the same setting as~\cite{KeConferance2017}; however, our work has significantly different flavor than that of~\cite{KeConferance2017}. First,~\cite{KeConferance2017} finds the optimal test in the non-asymptotic regime where there exist two users, while here, the asymptotic limits of user privacy for the case of a large number of users are obtained. Second,~\cite{KeConferance2017} obtains the necessary conditions for breaking privacy, while here, conditions for both perfect anonymity and no privacy are obtained. Third,~\cite{KeConferance2017} establishes the optimal test for the case with binary alphabets where each user's trace consists of independent and identically distributed (i.i.d.) samples drawn from a Bernoulli distribution, while here, we extend our results to the case where each user's trace is governed by i.i.d. random samples of a multinoulli distribution. We also extend our results for a more general Markov chain model.

The remainder of this paper is organized as follows. Section \ref{sec:framework} discusses the system model and the metrics used in the paper. Achievability and converse results for the two-state i.i.d.\ model are presented
in Section \ref{Two-state}, and their extensions to the $r$-state i.i.d.\ model are presented in Section \ref{r-state}. In addition, achievability and converse results for a more general Markov chain model are presented in Section \ref{sec:markov}. Section \ref{conclusion} provides some final conclusions and directions for future work.

\section{Framework}
\label{sec:framework}

We assume a system with $n$ users. Each user creates a length-$m$ sequence of data, which is denoted by $\textbf{X}_u$,
\[\textbf{X}_u = \left[
X_u(1), X_u(2), \cdots X_u(m)\right]^T, \ \ \ \textbf{X} =\left[\textbf{X}_{1}, \textbf{X}_{2}, \cdots, \textbf{X}_{n}\right],
\]
where $X_u(k)$ is the actual data point of user $u$ at time $k$. 
For each user, there also exists a length-$l$ sequence of its past behavior which is denoted as $\textbf{W}_u$,
\[\textbf{W}_u = \left[
W_u(1), W_u(2), \cdots W_u(l)\right]^T , \ \ \ \textbf{W} =\left[\textbf{W}_{1}, \textbf{W}_{2}, \cdots, \textbf{W}_{n}\right].
\]
where $W_u(k)$ is the observation of the prior behavior of user $u$ at time $k$.

The adversary has access to the observations of the prior users' behavior and wants to use this knowledge to break users' privacy despite the usage of some PPMs. As shown in Figure \ref{fig:learning}, an anonymization technique is employed in order to perturb the users' identity before the data is provided to the IoT application. In this figure, ${Y}_u(k)$ is the reported data point of user $u$ at time $k$ after applying anonymization; hence, the adversary observes
\[\textbf{Y}_u = \left[
Y_u(1), Y_u(2), \cdots Y_u(m)\right]^T , \ \ \ \textbf{Y} =\left[\textbf{Y}_{1}, \textbf{Y}_{2}, \cdots, \textbf{Y}_{n}\right].
\]
where $\textbf{Y}$ is the permuted version of $\textbf{X}$. 
\begin{figure}[h]
	\centering
	\includegraphics[width =1\linewidth]{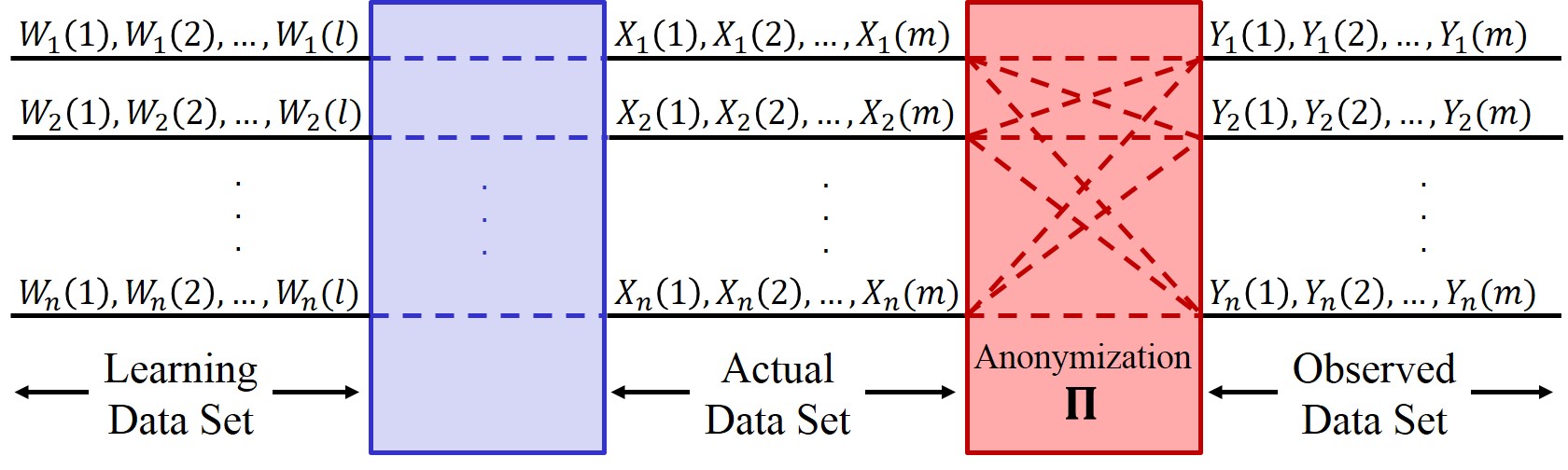}
	\caption{The goal of the adversary: match each sequence of $\textbf{W}_u$ of user $u\in \{ 1, 2, \cdots, n\}$ to an observed sequence $\textbf{Y}_u$ for $u\in \{1, 2, \cdots, n\}$.}
	\label{fig:learning}
\end{figure}

\subsection{Models and Metrics}
\textbf{{\textit{Data Points Model:}}}
We assume there exist $r$ possible values $\{0,1, \cdots, r-1\}$ for each data point. As shown in Figure \ref{fig:learning}, there exist two traces for each user: one that is termed "training data" and one that is termed "actual data," which needs to be protected from a malicious adversary. Remember that these two traces are generated from the same unknown probability distribution. In other words, for $k \in \{1,2, \cdots, m\}$ and $k' \in \{1, 2, \cdots, l\}$, both $X_u(k)$ and $W_u(k')$ are drawn from a user-specific probability distribution denoted as $\textbf{p}_u$.
 While all $\textbf{p}_u$'s are unknown to the adversary, each of them is drawn independently from a continuous density function $f_{\textbf{P}}(\textbf{x})$, where for all $x$ in the support of $f_{\textbf{P}}(\textbf{x})$, we assume
\begin{align}
\no 0 < \delta_1 <f_{\textbf{P}}(\textbf{x}) < \delta_2<\infty.\ \
\end{align}

\textbf{{\textit{Anonymization Mechanism:}}}
As shown in Figure \ref{fig:learning}, the mapping between users and data sequences is randomly permuted in order to achieve privacy. This random permutation is chosen uniformly at random among all $n!$ possible permutations on the set of $n$ users $\left(\mathbf{\Pi}:\{1, 2, \cdots, n\} \mapsto \{1, 2, \cdots, n\}\right)$; then, 
$\textbf{Y}_u=\textbf{X}_{\Pi^{-1}}$, $\textbf{Y}_{\Pi(u)}=\textbf{X}_{u}.$

\textbf{{\textit{Adversary Model:}}}
The adversary tries to match each sequence in the collection of training data traces $\left\{\textbf{W}_u, u=1, 2, \cdots, n\right\}$ with the sequence in the observation data traces $\left\{\textbf{Y}_u, u=1, 2, \cdots, n\right\}$ that is drawn from the same probability distribution, which we term \emph{statistical matching}. This is equivalent to finding the permutations of the user identities between two collections.
Note that the adversary knows the anonymization mechanism; however, he/she does not know the realization of the random permutation function.

Following~\cite{Nazanin_IT}, the definition of no privacy is as follows:
\begin{define}
	For an algorithm of the adversary that tries to estimate the actual data point of user $u$ at time $k$, define the error probability as
	\[P_e(u,k)= P\left(\widetilde{X_u(k)} \neq X_u(k)\right),\]
	where $X_u(k)$ is the actual data point of user $u$ at time $k$, and $\widetilde{X_u(k)}$ is the adversary's estimated data point of user $u$ at time $k$. Now, define ${\cal E}$ as the set of all possible estimators of the adversary. Then, user $u$ has \emph{no privacy} at time $k$, if and only if for large enough $n$,
	\[
	\no P^{*}_e(u,k)= \inf_{\cal E} {P\left(\widetilde{X_u(k)} \neq X_u(k)\right)} \to 0.
	\]
	Hence, a user has no privacy if there exists an algorithm for the adversary to estimate $X_u(k)$ with diminishing error probability as $n$ goes to infinity.
\end{define} 

In this paper, we also consider the situation in which there is perfect anonymity.
\begin{define}
	User $u$ has \emph{perfect anonymity} at time $k$ if and only if
	\begin{align}
	\no \lim\limits_{n\to \infty}
	\no H\left(\Pi(1)|\textbf{W}, \textbf{Y}\right) \to +\infty,
	\end{align}
where $H\left(\Pi(1)|\textbf{W}, \textbf{Y}\right)$ is the entropy of $\Pi(1)$ given $\textbf{W}$ and $\textbf{Y}$.
\end{define}

\section{Two-State i.i.d.\ Model}
\label{Two-state}

In this section, we assume each user's trace consists of samples from an i.i.d. random process and there are only two possible values for each user data point $X_u(k) \in \{0,1\}$. Thus, both training traces and real data traces are governed by an i.i.d. Bernoulli distribution with parameter $p_u$, where $p_u$ is probability that user $u$ taking value of a $1$, hence, 
$$W_u(k) \sim Bernoulli \left(p_u\right),$$
and
$$X_u(k) \sim Bernoulli \left(p_u\right), \ \ \ Y_u(k) \sim Bernoulli \left(p_{\Pi(u)}\right).$$
As discussed in Section \ref{sec:framework}, while $p_u$'s are unknown to the adversary, they are drawn independently from a known continuous density function ($f_P(x)$), where for all $x \in (0,1)$, we have 
\begin{align}
0 < \delta_1 <f_{{P}}(x) < \delta_2<\infty.\ \
\label{eq_f}
\end{align}

\subsection{Perfect Anonymity Analysis}
\label{two_perfect}
The following theorem states that if $m$ or $l$ are significantly smaller than $n^2$ in this two-state model, then all users have perfect anonymity.
\begin{thm}\label{two_perfect_thm}
	For the above two-state i.i.d.\ model, if $\textbf{Y}$ is the anonymized version of $\textbf{X}$, and $\textbf{W}$ is the past behavior of the users as defined above, and
	\begin{itemize}
		\item at least one of $m$ or $l$ is less than or equal to $cn^{2-\alpha}$ for any $c, \alpha>0$;
	\end{itemize}
	then, user $1$ has perfect anonymity at time $k$.
\end{thm}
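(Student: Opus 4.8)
The plan is to show that, conditioned on everything the adversary observes, user~$1$'s true label $\Pi(1)$ remains spread over a candidate set whose size grows without bound, so $H(\Pi(1)\mid\mathbf{W},\mathbf{Y})\to\infty$. I would first reduce to count statistics: since each user's two traces are i.i.d.\ $\mathrm{Bernoulli}(p_u)$ with $p_u$ drawn independently from $f_P$, a one-line Bayes computation gives, for $M_u:=\sum_k W_u(k)$ and $N_v:=\sum_k Y_v(k)$, that $P(\Pi=\pi\mid\mathbf{M},\mathbf{N})\propto\prod_u L(M_u,N_{\pi(u)})$ with $L(a,b):=\int_0^1 p^{a+b}(1-p)^{(l-a)+(m-b)}f_P(p)\,dp$; the $\binom{l}{M_u},\binom{m}{N_v}$ factors drop out since their product is $\pi$-free. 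Hence $H(\Pi(1)\mid\mathbf{W},\mathbf{Y})=H(\Pi(1)\mid\mathbf{M},\mathbf{N})$. Since $L(a,b)$ depends symmetrically on $(a,l)$ and $(b,m)$, the two traces play interchangeable roles, and I describe the case in which it is $l$ that satisfies $l\le cn^{2-\alpha}$.

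The engine is an exchangeability observation: any two users $u,u'$ with $M_u=M_{u'}$ are exchangeable under the posterior, because transposing them leaves $\prod_w L(M_w,N_{\pi(w)})$ unchanged. With $\mathcal{S}_0:=\{u:M_u=M_1\}$ (which contains $1$), summing the equal posteriors $P(\Pi(u)=v\mid\mathbf{M},\mathbf{N})$, $u\in\mathcal{S}_0$, over the disjoint events $\{\Pi(u)=v\}$ gives $P(\Pi(1)=v\mid\mathbf{M},\mathbf{N})=P(\Pi^{-1}(v)\in\mathcal{S}_0\mid\mathbf{M},\mathbf{N})/|\mathcal{S}_0|\le 1/|\mathcal{S}_0|$, hence $H(\Pi(1)\mid\mathbf{M},\mathbf{N})\ge\mathbb{E}[\log|\mathcal{S}_0|]$; the mirror argument (exchangeability of labels with equal $N$-count, equivalently conditioning on $N_{\Pi(1)}$) gives $H(\Pi(1)\mid\mathbf{M},\mathbf{N})\ge\mathbb{E}[\log\#\{v:N_v=N_{\Pi(1)}\}]$. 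Now $\mathbb{E}|\mathcal{S}_0|=1+(n-1)\sum_j\mu_j^2\ge 1+(n-1)/(l+1)$ by Cauchy--Schwarz ($\mu_j:=P(M_u=j)$, supported on $l+1$ values), which diverges once $l=o(n)$; a routine concentration step (the indicators $\mathbf{1}[M_u=M_1]$ are i.i.d.\ given $M_1$) upgrades this to $|\mathcal{S}_0|\to\infty$ w.h.p., so $\mathbb{E}[\log|\mathcal{S}_0|]\to\infty$. Symmetrically the second bound diverges once $m=o(n)$. This already proves the theorem whenever $\min(l,m)=o(n)$ --- in particular for every $\alpha>1$, and also whenever the unconstrained length is itself $o(n)$.

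It remains to treat $\alpha\le 1$ with $l\le cn^{2-\alpha}$ and $m=\Omega(n)$, where exact count-collisions are too rare. Here I would use the approximate confusability set $\mathcal{S}_\tau:=\{u:|M_u-M_1|\le\tau_n\}$ with $\tau_n$ chosen just above $l/n$; the density lower bound $\delta_1$ makes $|\mathcal{S}_\tau|\gtrsim\delta_1 n\tau_n/l\to\infty$ w.h.p. Same-count users in $\mathcal{S}_\tau$ are only approximately exchangeable, so I would condition additionally on $\Pi(\mathcal{S}_\tau)$: given $(\mathbf{M},\mathbf{N},\Pi(\mathcal{S}_\tau))$, the bijection $\Pi|_{\mathcal{S}_\tau}\colon\mathcal{S}_\tau\to\Pi(\mathcal{S}_\tau)$ has law $\propto\prod_{u\in\mathcal{S}_\tau}L(M_u,N_{\sigma(u)})$, and I would show this is $(1+o(1))$-uniform over all $|\mathcal{S}_\tau|!$ bijections, forcing $\Pi(1)$ to be near-uniform on the $|\mathcal{S}_\tau|$-element set $\Pi(\mathcal{S}_\tau)$ and hence $H\ge\mathbb{E}[\log|\mathcal{S}_\tau|]-o(1)\to\infty$. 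The near-uniformity would follow from the identity $L(a,b+\delta)/L(a,b)=\mathbb{E}_{p\sim g_{a,b}}[(p/(1-p))^{\delta}]$, where $g_{a,b}\propto p^{a+b}(1-p)^{(l-a)+(m-b)}f_P(p)$ is a $\mathrm{Beta}$-type density (within $[\delta_1,\delta_2]$-factors of a genuine $\mathrm{Beta}(a+b+1,\cdot)$): the log-ratio of two bijections is $\sum_u(N_{\sigma(u)}-N_{\sigma'(u)})\log\tfrac{\hat{p}_u}{1-\hat{p}_u}+(\mathrm{error})$ with $\hat{p}_u\approx(M_u+N_{\sigma'(u)})/(l+m)$, the leading sum telescopes to $0$ because $\sigma,\sigma'$ share the image $\Pi(\mathcal{S}_\tau)$, and the error --- which involves only the spreads of $\mathbf{M}$ over $\mathcal{S}_\tau$ and of $\mathbf{N}$ over $\Pi(\mathcal{S}_\tau)$, plus the modulus of continuity of $f_P$ --- is meant to be $o(1)$ for a suitable $\tau_n$.

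The step I expect to be the true obstacle is this last error estimate when $m$ is vastly larger than $l$ (e.g.\ $l=n$, $m=n^{100}$): then the $N$-values of $\Pi(\mathcal{S}_\tau)$ are so dispersed that the individual factors $L(M_u,N_{\sigma(u)})/L(M_u,N_{\sigma'(u)})$ are far from $1$, and the entire cancellation must be extracted from the telescoping sum, which also forces one to restrict $\tau_n$ and track the $\mathrm{Beta}$-integral asymptotics with care. An alternative for that sub-regime is to exploit the $m$-large structure directly: there $P(\Pi(1)=v\mid\mathbf{M},\mathbf{N})$ concentrates on labels $v$ with $N_v/m$ within $\asymp 1/\sqrt{l}$ of $M_1/l$, of which there are $\gtrsim\delta_1 n/\sqrt{l}\ge n^{\alpha/2}/\sqrt{c}\to\infty$, with no single one carrying more than $\asymp\sqrt{l}/n$ of the posterior mass; the clean part is counting those labels (density lower bound $\delta_1$ again), but establishing the ``no heavy atom'' claim amounts to showing that the posterior on the owner of label $v$ is spread over the $\asymp n/\sqrt{l}$ users whose training count is near $lN_v/m$ --- a permanent-spreading estimate, which is the real difficulty there. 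Throughout, the density lower bound $\delta_1$ is what makes the confusable set large, while the upper bound $\delta_2$ together with continuity of $f_P$ keeps the posterior from concentrating inside it.
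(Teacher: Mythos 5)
Your plan is genuinely incomplete in exactly the regime the theorem is about. The exchangeability part is sound: reducing to the count statistics $(\mathbf{M},\mathbf{N})$, bounding $P(\Pi(1)=v\mid\mathbf{M},\mathbf{N})\le 1/|\mathcal{S}_0|$, and showing $\mathbb{E}[\log|\mathcal{S}_0|]\to\infty$ works when $\min(m,l)=o(n)$. But the hypothesis only says that one of $m,l$ is at most $cn^{2-\alpha}$ for some $\alpha>0$; for $0<\alpha\le 1$ the constrained length can be anywhere between $n$ and $n^{2}$, exact collisions are too rare, and your argument then rests on two statements you yourself flag as ``the true obstacle'' and ``the real difficulty'': the $(1+o(1))$-uniformity of the conditional law of $\Pi$ restricted to $\mathcal{S}_\tau$ given $\bigl(\mathbf{M},\mathbf{N},\Pi(\mathcal{S}_\tau)\bigr)$, and, in your alternative route, the ``no heavy atom''/permanent-spreading estimate. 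Neither is proved; nothing in the proposal controls the error term of the telescoping expansion when the $N$-counts over $\Pi(\mathcal{S}_\tau)$ are dispersed on a scale far exceeding the $M$-window (your own example $l=n$, $m=n^{100}$), and the choice of $\tau_n$ is left open. So the theorem is established only when $\min(m,l)=o(n)$ (in particular $\alpha>1$), not for every $\alpha>0$ as required.

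For comparison, the paper sidesteps the two-sided matching difficulty entirely with a short reduction: given $\mathbf{Y}$, $\mathbf{W}\rightarrow\mathbf{P}\rightarrow\Pi(1)$ forms a Markov chain, so the data processing inequality gives $H(\Pi(1)\mid\mathbf{W},\mathbf{Y})\ge H(\Pi(1)\mid\mathbf{P},\mathbf{Y})$, i.e., an adversary handed the true parameters $\mathbf{P}$ in place of the training traces is only stronger; \cite[Theorem 1]{tifs2016} already gives $H(\Pi(1)\mid\mathbf{P},\mathbf{Y})\to\infty$ when $m\le cn^{2-\alpha}$, and the case where it is $l$ rather than $m$ that is small is handled by exchanging the roles of $\mathbf{W}$ and $\mathbf{Y}$. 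That conditioning step is precisely what removes your obstruction: once one side's empirical counts are replaced by the exact $p_u$'s, only a single empirical trace remains, the $m\gg l$ dispersion problem disappears, and the confusable set is built directly from the $p_u$'s using the density lower bound $\delta_1$. If you want a self-contained proof, make this reduction first and then run your confusable-set/exchangeability argument against the one remaining trace (essentially re-deriving the cited theorem), rather than attempting near-uniformity for the two-sided empirical matching.
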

\begin{proof}

First, consider the case $m \leq l$. Here, $\textbf{W}$ is considered as the training set and $\textbf{Y}$ is considered as the observed set; thus, 
given $\textbf{Y}$, $\textbf{W} \rightarrow \textbf{P} \rightarrow \Pi(1)$ forms a Markov chain. According to the data processing inequality,
\begin{align}
\no I\left(\Pi(1); \textbf{W}|\textbf{Y}\right) \leq I\left(\Pi(1);\textbf{P}|\textbf{Y}\right);
\end{align}
thus, 
\begin{align}
\no H(\Pi(1)|\textbf{Y})-H\left(\Pi(1)|\textbf{W}, \textbf{Y}\right) \leq H(\Pi(1)|\textbf{Y})-H\left(\Pi(1)|\textbf{P}, \textbf{Y}\right),
\end{align}	
and
\begin{align}
\no H\left(\Pi(1)|\textbf{W}, \textbf{Y}\right) \geq H\left(\Pi(1)|\textbf{P}, \textbf{Y}\right).
\end{align}		
In~\cite[Theorem 1]{tifs2016}, it is shown that if $m=n^{2-\alpha}$, $H\left(\Pi(1)|\textbf{P},\textbf{Y}\right) \to +\infty$, so, we can conclude
\begin{align}
\no H\left(\Pi(1)|\textbf{W}, \textbf{Y}\right) \to +\infty,
\end{align}	
as $n \to \infty$.

Now, consider the case $l \leq m$. By symmetry of the problem $\textbf{Y}$ can be considered as the training set and $\textbf{W}$ can be considered as the observed data. Thus, we can similarly prove the same results.
\end{proof}

\subsection{No Privacy Analysis}
\label{two_no}
The following theorem states that if both $m$ and $l$ are significantly larger than $n^2$ in this two-state model, then the adversary can find an algorithm to successfully estimate users' data points with arbitrarily small error probability, and as a result break users' privacy.

\begin{thm}\label{two_no_thm}
For the above two-state i.i.d.\ model, if $\textbf{Y}$ is the anonymized version of $\textbf{X}$, and $\textbf{W}$ is the past behavior of the users as defined above, and
\begin{itemize}
	\item $m=cn^{2+\alpha}$ for any $c, \alpha >0$;
	\item $l=c'n^{2+\alpha}$ for any $c', \alpha>0$;
\end{itemize}
then, user $1$ has no privacy at time $k$.
\end{thm}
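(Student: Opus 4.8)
The plan is to hand the adversary an explicit de-anonymization rule and show its error probability vanishes; since $\textbf{Y}_{\Pi(u)}=\textbf{X}_u$, knowing $\Pi(1)$ yields $X_1(k)=Y_{\Pi(1)}(k)$ exactly, so it suffices to recover $\Pi(1)$ with probability tending to $1$. First I would have the adversary compute the empirical frequencies $\hat p^{W}_{1}=\frac1l\sum_{k=1}^{l}W_1(k)$ and $\hat p^{Y}_{v}=\frac1m\sum_{k=1}^{m}Y_v(k)$ for $v=1,\dots,n$, set $\widehat{\Pi(1)}=\arg\min_{v}\bigl|\hat p^{W}_{1}-\hat p^{Y}_{v}\bigr|$, and output $\widetilde{X_1(k)}=Y_{\widehat{\Pi(1)}}(k)$. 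On the event $\{\widehat{\Pi(1)}=\Pi(1)\}$ this equals $X_1(k)$, so $P_e(1,k)\le P\bigl(\widehat{\Pi(1)}\neq\Pi(1)\bigr)$, and the whole argument reduces to bounding the right-hand side.

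Next I would fix a resolution $\Delta_n\to0$ and split the failure of the matching into a concentration part and a separation part. Conditioning on the parameters $p_1,\dots,p_n$ (the permutation $\Pi$ merely relabels the i.i.d.\ observation blocks), Hoeffding's inequality gives $P\bigl(|\hat p^{W}_{1}-p_1|>\Delta_n\bigr)\le 2e^{-2l\Delta_n^{2}}$ and $P\bigl(|\hat p^{Y}_{v}-p_{\Pi^{-1}(v)}|>\Delta_n\bigr)\le 2e^{-2m\Delta_n^{2}}$ for each $v$, so a union bound controls all $n+1$ of these estimates simultaneously by $2(n+1)e^{-2\min(l,m)\Delta_n^{2}}$. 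On the complementary good event one has $|\hat p^{W}_{1}-\hat p^{Y}_{\Pi(1)}|\le 2\Delta_n$, while for every $v\neq\Pi(1)$, writing $u'=\Pi^{-1}(v)$, $|\hat p^{W}_{1}-\hat p^{Y}_{v}|\ge |p_1-p_{u'}|-2\Delta_n$; hence the correct index strictly wins the arg-min provided in addition $\min_{u'\neq1}|p_1-p_{u'}|>4\Delta_n$. Because $f_{P}<\delta_2$, a union bound over the other $n-1$ users bounds the probability that this gap fails by $8(n-1)\delta_2\Delta_n$.

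Finally I would choose $\Delta_n$ to annihilate both terms at once, and this is exactly where the hypothesis $m=cn^{2+\alpha}$, $l=c'n^{2+\alpha}$ enters: separation needs $n\Delta_n\to0$ and concentration needs $\min(l,m)\Delta_n^{2}\to\infty$, and with $\min(l,m)\ge \min(c,c')\,n^{2+\alpha}$ the choice $\Delta_n=n^{-1-\alpha/3}$ works, giving $n\Delta_n=n^{-\alpha/3}\to0$ and $\min(l,m)\Delta_n^{2}\ge\min(c,c')\,n^{\alpha/3}\to\infty$, so both $2(n+1)e^{-2\min(c,c')n^{\alpha/3}}$ and $8(n-1)\delta_2 n^{-1-\alpha/3}$ tend to $0$. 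Therefore $P_e(1,k)\to0$, and since this is a legitimate element of ${\cal E}$, $P^{*}_e(1,k)\to0$, i.e.\ user $1$ has no privacy at time $k$. The step I expect to be the crux is precisely this last balancing act: the empirical parameters must be estimated finely enough to separate $p_1$ from its nearest competitor among $n$ i.i.d.\ draws — a gap of order $1/n$ — yet coarsely enough that $\Theta(n^{2+\alpha})$ samples concentrate at that scale, and the polynomial slack $\alpha>0$ is what guarantees a valid $\Delta_n$ exists (so the bound is not claimed tight; a window of size $n^{2}$ times a polylogarithmic factor already suffices for this argument). A global minimum-distance matching over all users would work equally well, but it requires a slightly larger union bound and is unnecessary for a per-user statement.
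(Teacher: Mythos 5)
Your proposal is correct and follows essentially the same route as the paper's proof: empirical-frequency matching of $\textbf{W}_1$ to the anonymized traces, Chernoff/Hoeffding concentration at a scale $\Delta_n$ chosen polynomially between $n^{-1}$ and $\min(l,m)^{-1/2}$, and a density-bound-plus-union-bound argument showing the other users' parameters are separated from $p_1$ by more than $4\Delta_n$ with high probability (the paper uses $\Delta_n=n^{-1-\alpha/4}$ and a threshold test rather than your $n^{-1-\alpha/3}$ and arg-min rule, which is immaterial).
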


\begin{proof}
For $u \in \{1, 2, \cdots n\}$, define
\[
\overline{{Y}_u}=\frac{Y_u(1)+Y_u(2)+ \cdots +Y_u(m)}{m},
\]
\[
\overline{{Y}_{\Pi(u)}}=\frac{X_u(1)+X_u(2)+ \cdots +X_u(m)}{m},
\]
and
\[
\overline{{W}_u}=\frac{W_u(1)+W_u(2)+ \cdots +W_u(l)}{l}.
\]
We claim that for $m =cn^{2 + \alpha}$, $l =c'n^{2 +\alpha}$ and large enough $n$:
\begin{enumerate}
	\item $\mathbb{P}\left(\ \abs{\overline{{Y}_{\Pi(1)}}-\overline{{W}_1}}\leq \Delta_n\right) \to 1,$
	\item $\mathbb{P}\left( \bigcup\limits_{u=2}^n \left\{\ \abs{\overline{{Y}_{\Pi(u)}}-\overline{{W}_1}}\leq \Delta_n \right\}\right) \to 0,$
\end{enumerate}
where $\Delta_n=n^{-(1+\frac{\alpha}{4})}$.
Thus, the adversary can match $\textbf{W}_1$ to $\textbf{Y}_{\Pi(1)}$.

\textit{\textbf{First Step:}} 
We want to show 
\begin{align}
\no \mathbb{P}\left(\ \abs{\overline{{X}_{u}}-\overline{{W}_u}}\leq \Delta_n\right) \to 1.
\end{align}
Note $\mathbb{E}[X_{u}(k)]=\mathbb{E}[W_u(k)]=p_u$, so as $n \to \infty$,
\begin{align}
\no \mathbb{P}\left(\ \abs{\overline{{X}_{u}}-\overline{{W}_u}}\geq \Delta_n\right)=& \mathbb{P}\left(\ \abs{\overline{{X}_{u}}-p_u-\overline{{W}_u}+p_u}\geq \Delta_n\right)\\
\nonumber &\hspace{-0.7 in}\leq \mathbb{P}\left(\ \abs{\overline{{X}_{u}}-p_u}+\abs{\overline{{W}_u}-p_u} \geq\Delta_n \right)\\
\nonumber &\hspace{-0.7 in}\leq \mathbb{P}\left(\left\{\ \abs{\overline{{X}_{u}}-p_u}\geq\frac{\Delta_n}{2}\right\} \bigcup \left\{\ \abs{\overline{{W}_u}-p_u}\geq\frac{\Delta_n}{2}\right\} \right)\\
\no &\hspace{-0.7 in}\leq \mathbb{P}\left(\ \abs{\overline{{X}_{u}}-p_u}\geq\frac{\Delta_n}{2}\right)+\mathbb{P}\left(\ \abs{\overline{{W}_u}-p_u}\geq\frac{\Delta_n}{2} \right)\\
\no &\hspace{-0.7 in}\leq 2e^{-\frac{m\Delta_n^2}{12p_u}}+2e^{-\frac{l\Delta_n^2}{12p_u}}\\
\no &\hspace{-0.7 in}= 2e^{-\frac{cn^{2+\alpha}\cdot n^{-2-\frac{\alpha}{2}}}{12p_u}}+2e^{-\frac{c'n^{2+\alpha}\cdot n^{-2-\frac{\alpha}{2}}}{12p_u}}\\
&\hspace{-0.7 in}= 2e^{-\frac{cn^{\frac{\alpha}{2}}}{12}}+2e^{-\frac{c'n^{\frac{\alpha}{2}}}{12}} \to 0,
\label{eq_imp}
\end{align}
where the first inequality follows from the fact that $\abs{a-b} \leq \abs{a}+\abs{b}$, and as a result, $\mathbb{P}\left(\ \abs{a-b} \geq \Delta_n\right) \leq \mathbb{P}\left(\ \abs{a}+\abs{b} \geq \Delta_n\right)$. The union bound yields the third inequality, and the fourth inequality follows from Chernoff bounds. Now, for u=1, we have
\[\mathbb{P}\left(\ \abs{\overline{{Y}_{\Pi(1)}}-\overline{{W}_1}}\leq \Delta_n\right) \to 1,\]
as $n \to \infty.$

\textit{\textbf{Second Step:}}
First, we show as $n \to \infty$,
\[\mathbb{P}\left(\bigcup\limits_{u=2}^n \left\{\ \abs{p_u-p_1} \leq 4\Delta_n\right\} \right)\rightarrow 0. \]
According to (\ref{eq_f}), for all $u \in \{2, 3, \cdots, n\}$, we have
\[\mathbb{P}\left( \ \abs{p_u-p_1} \leq 4\Delta_n\right)\leq 8 \Delta_n \delta_2,\]
and according to the union bound,
\begin{align}
\no \mathbb{P}\left( \bigcup\limits_{u=2}^n \left\{\ \abs{p_u-p_1} \leq 4 \Delta_n\right\} \right) &\leq \sum\limits_{u=2}^n \mathbb{P}\left(\ \abs{p_u-p_1} \leq 4 \Delta_n\right) \\
\nonumber &\leq 8n \Delta_n \delta_2\\
\nonumber &= 8 {n^{-\frac{\alpha}{4}}} \delta_2 \to 0,\ \
\end{align}
as $n \to \infty$. Thus, for $u \in \{2, 3, \cdots, n\}$, the distance between $p_u$ and $p_1$ is bigger than $4\Delta_n$ with high probability.

Next, we show as $n \to \infty$,
\[\mathbb{P}\left(\bigcup\limits_{u=2}^n \left\{\ \abs{\overline{{W}_u}-\overline{{W}_1}} \leq 2\Delta_n\right\} \right)\rightarrow 0. \]
Note for all $u \in \{1, 2, \cdots, n \}$, Chernoff bounds yields:
\begin{align}
\mathbb{P}\left(\ \abs{\overline{{W}_u}-p_u}\geq \Delta_n\right) \leq 2e^{-\frac{l\Delta_n^2}{3p_u}} \leq 2e^{-\frac{l\Delta_n^2}{3}}.
\label{Wu_Pu}
\end{align}
As a result, for $u=1$, we have
\begin{align}
\no \mathbb{P}\left(\ \abs{\overline{{W}_1}-p_1}\geq \Delta_n\right) \leq 2e^{-\frac{c'n^{\frac{\alpha}{2}}}{3}} \to 0,
\end{align}
as $n \to \infty$. In other words, with high probability, the distance between $\overline{W_1}$ and $p_1$ is less than $\Delta_n$.

Now, given the fact that the distance between all $p_u$'s and $p_1$ is bigger than $4\Delta_n$, and the fact that the distance between $\overline{W_1}$ and $p_1$ is less than $\Delta_n$, for all $u \in \{2, 3, \cdots, n\}$, we have
\begin{align}
\no \mathbb{P}\left(\ \abs{\overline{{W}_u}-\overline{{W}_1}}\leq 2\Delta_n\right) &\leq \mathbb{P}\left(\ \abs{\overline{{W}_u}-p_u}\geq \Delta_n \right)\\
\no &\leq 2e^{-\frac{l\Delta_n^2}{3}}.
\end{align}
Thus,
\begin{align}
\no \mathbb{P}\left(\bigcup\limits_{u=2}^n \left\{\ \abs{\overline{{W}_u}-\overline{{W}_1}} \leq 2\Delta_n\right\} \right) &\leq \sum\limits_{u=2}^n \mathbb{P}\left(\ \abs{\overline{{W}_u}-\overline{{W}_1}} \leq 2\Delta_n\right) \\
\nonumber &\leq 2ne^{-\frac{l\Delta_n^2}{3}}\\
\nonumber &=2ne^{-\frac{c'n^{\frac{\alpha}{2}}}{3}} \to 0,\ \
\end{align}
as $n \to \infty$.

Now, we claim that given the fact that the distances between each of the $\overline{{W}_u}$'s and $\overline{{W}_1}$ are bigger than $2\Delta_n$, we have $$\mathbb{P}\left( \bigcup\limits_{u=2}^n \left\{\ \abs{\overline{{X}_{u}}-\overline{{W}_1}}\leq \Delta_n\right\}\right) \to 0.$$
Note, using (\ref{eq_imp}), we have
\begin{align}
\no \mathbb{P}\left(\ \abs{\overline{{X}_{u}}-\overline{{W}_1}}\leq \Delta_n\right)&=\mathbb{P}\left(\ \abs{\overline{{X}_{u}}-\overline{{W}_{u}}}\geq \Delta_n\right)\\
\no &\leq 2e^{-\frac{cn^{\frac{\alpha}{2}}}{12}}+2e^{-\frac{c'n^{\frac{\alpha}{2}}}{12}}.\ \
\end{align}
Thus, by using union bound, we have
\begin{align}
\no \mathbb{P}\left( \bigcup\limits_{u=2}^n \left\{\ \abs{\overline{{X}_{u}}-\overline{{W}_{1}}}\leq \Delta_n\right\}\right) &\leq \sum\limits_{u=2}^n\mathbb{P}\left(\ \abs{\overline{{X}_{u}}-\overline{{W}_{u}}}\geq \Delta_n\right)\\
\no &\leq 2ne^{-\frac{cn^{\frac{\alpha}{2}}}{12}}+2ne^{-\frac{c'n^{\frac{\alpha}{2}}}{12}} \to 0,\ \
\end{align}
as $n \to \infty$.

After completing the first and second steps, we can conclude if $m=cn^{2+\alpha}$ and $l=c'n^{2+\alpha}$, users have no privacy as $n \to \infty$.
\end{proof}

\section{$r$-State i.i.d.\ Model}
\label{r-state}

In this section, we assume each user's trace consists of samples from an i.i.d.\ random process, and users' data points can have $r$ possibilities, where $X_u(k) \in \{0, 1, \cdots, r-1\}$. Thus, both training traces and real data traces are governed by an i.i.d. multinoulli distribution with parameter $\textbf{p}_u$, and
\[\textbf{p}_u =\left[
p_u(1), p_u(2), \cdots p_u(r-1)\right]^T , \ \ \ \textbf{p} =\left[\textbf{p}_{1}, \textbf{p}_{2}, \cdots, \textbf{p}_{n}\right].
\]
where $p_u(i)$ is the probability that a datum of user $u$ has value $i$.

As discussed in Section \ref{sec:framework}, while $\textbf{p}_u$'s are unknown to the adversary, they are drawn independently from a known continuous density function $f_\textbf{P}(\textbf{x})$, where for all $\textbf{x }\in \mathcal{R}_{\textbf{p}}$, 
\begin{align}
\no \mathcal{R}_{\textbf{p}} &= \bigg{\{} (x_1,x_2, \cdots, x_{r-1}) \in (0,1)^{r-1}:\\
\no &\ \ \ \ \ \ x_i > 0 , x_1+x_2+\cdots+x_{r-1} < 1,\ i=1, 2, \cdots, r-1\bigg{\}},
\end{align}
we have 
\begin{align}
0 < \delta_1 <f_{{\textbf{P}}}(\textbf{x}) < \delta_2<\infty.\ \
\end{align}

\subsection{Perfect Anonymity Analysis}
\label{r_perfect}

The following theorem states that if $m$ or $l$ are significantly smaller than $n^{\frac{2}{r-1}}$ in this $r$-state model, then all users have perfect anonymity.

\begin{thm}\label{r_perfect_thm}
	For the above $r$-state i.i.d.\ model, if $\textbf{Y}$ is the anonymized version of $\textbf{X}$, and $\textbf{W}$ is the past behavior of the users as defined above, and
	\begin{itemize}
\item at least one of $m$ or $l$ is less than or equal to $cn^{\frac{2}{r-1}-\alpha}$ for any $c, \alpha>0$;
	\end{itemize}
	then, user $1$ has perfect anonymity at time $k$.
\end{thm}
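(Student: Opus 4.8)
The plan is to mirror the proof of Theorem~\ref{two_perfect_thm}: reduce the $r$-state problem to the situation where the adversary is handed the true distributions $\textbf{P}$, and then invoke the corresponding multinoulli anonymization bound. As before, assume without loss of generality that $m \leq l$, so that $\textbf{W}$ plays the role of the training set and $\textbf{Y}$ the role of the observed set. Since the prior sequences $\textbf{W}$ are generated from $\textbf{P}$ independently of the permutation $\mathbf{\Pi}$ and of $\textbf{X}$ (hence of $\textbf{Y}$), the chain $\textbf{W} \rightarrow \textbf{P} \rightarrow \Pi(1)$ is Markov even after conditioning on $\textbf{Y}$; the data processing inequality gives $I(\Pi(1);\textbf{W}\mid\textbf{Y}) \leq I(\Pi(1);\textbf{P}\mid\textbf{Y})$, which rearranges to $H(\Pi(1)\mid\textbf{W},\textbf{Y}) \geq H(\Pi(1)\mid\textbf{P},\textbf{Y})$. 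Hence it suffices to show $H(\Pi(1)\mid\textbf{P},\textbf{Y}) \to +\infty$ whenever $m \leq c n^{\frac{2}{r-1}-\alpha}$, and the case $l \leq m$ then follows by the same symmetry argument used in Theorem~\ref{two_perfect_thm} (swap the roles of $\textbf{Y}$ and $\textbf{W}$).

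Second, I would establish this last statement, the $r$-state analogue of \cite[Theorem 1]{tifs2016} (a version of which already appears, with the threshold $n^{2/(r-1)}$, in \cite{Nazanin_IT, ISIT18-longversion}). The quantitative heart of it is that a length-$m$ i.i.d. sample from a multinoulli on $r$ symbols resolves the parameter $\textbf{p}_u \in \mathcal{R}_{\textbf{p}}$ only up to an error of order $m^{-1/2}$ in each of its $r-1$ free coordinates, so the parameter region is effectively split into about $m^{(r-1)/2}$ cells; when $m \ll n^{2/(r-1)}$ this is $\ll n$, forcing many users into each cell. Concretely, fix $\Delta_n$ of order $m^{-1/2}$, chosen so that $n \Delta_n^{\,r-1} \to \infty$ (valid precisely when $m \leq c n^{\frac{2}{r-1}-\alpha}$). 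Conditioned on $\textbf{p}_1$ lying well inside $\mathcal{R}_{\textbf{p}}$ (which holds with probability $1-O(\Delta_n)$ since $f_{\textbf{P}}<\delta_2$ and the boundary layer has volume $O(\Delta_n)$), the lower bound $f_{\textbf{P}}>\delta_1$ shows that the expected number of users $u\neq 1$ with $\|\textbf{p}_u-\textbf{p}_1\|_\infty \leq \Delta_n$ is $\geq (n-1)\delta_1(2\Delta_n)^{r-1}\to\infty$, and a second-moment/concentration argument (the $\textbf{p}_u$'s are i.i.d.) upgrades this to: the actual number $N_n$ of such users tends to infinity with high probability. For each such user, $\textbf{Y}_{\Pi(u)}$ and $\textbf{Y}_{\Pi(1)}$ are drawn from product multinoulli laws whose single-letter parameters differ by $O(\Delta_n)=O(m^{-1/2})$ per coordinate, so the total variation between the two length-$m$ product measures stays bounded away from $1$; consequently the posterior on $\mathbf{\Pi}$ given $(\textbf{P},\textbf{Y})$ cannot concentrate on one index, and spreads over all $N_n$ confusable indices with comparable mass, yielding $H(\Pi(1)\mid\textbf{P},\textbf{Y}) \gtrsim \log N_n \to +\infty$.

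I expect the main obstacle to be this second step, and specifically its $(r-1)$-dimensional bookkeeping. Two things need care: (i) lower-bounding the volume of the $\ell_\infty$-ball of radius $\Delta_n$ around $\textbf{p}_1$ intersected with $\mathcal{R}_{\textbf{p}}$ uniformly in $\textbf{p}_1$, including the clipped case where $\textbf{p}_1$ is within $\Delta_n$ of the simplex boundary (handled by either excluding that low-probability event or retaining a constant fraction of the ball); and (ii) converting "parameters within $O(m^{-1/2})$" into "sequences the adversary genuinely cannot separate," which requires a clean upper bound on the total variation (or chi-square / Hellinger divergence) between the two length-$m$ multinoulli product measures and a matching Fano-type lower bound on the conditional entropy. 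The one-dimensional $r=2$ instances of exactly these estimates are what \cite[Theorem 1]{tifs2016} supplies, so no new idea is needed beyond the higher-dimensional accounting, and the conclusion $H(\Pi(1)\mid\textbf{W},\textbf{Y}) \geq H(\Pi(1)\mid\textbf{P},\textbf{Y}) \to +\infty$ follows, giving user $1$ perfect anonymity at time $k$.
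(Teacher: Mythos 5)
Your reduction---the Markov chain $\textbf{W} \rightarrow \textbf{P} \rightarrow \Pi(1)$ given $\textbf{Y}$, the data processing inequality giving $H(\Pi(1)\mid\textbf{W},\textbf{Y}) \geq H(\Pi(1)\mid\textbf{P},\textbf{Y})$, and the symmetry swap for $l \leq m$---is exactly the paper's argument, and the remaining ingredient you sketch (that $H(\Pi(1)\mid\textbf{P},\textbf{Y}) \to +\infty$ when $m \leq c\,n^{\frac{2}{r-1}-\alpha}$) is precisely the known $r$-state result the paper simply invokes as Theorem~2 of \cite{tifs2016} rather than re-proving. So the proposal is correct and takes essentially the same route; your additional volume-counting sketch of the cited theorem is extra (and its final entropy step would need the tightening you yourself flag), but nothing in it conflicts with the paper.
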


\begin{proof}
We can now repeat the similar reasoning as Theorem \ref{two_perfect_thm}; then, by using~\cite[Theorem 2]{tifs2016}, the proof is complete.
\end{proof}

\subsection{No Privacy Analysis}
\label{r_no}

The following theorem states that if both $m$ and $l$ are significantly larger than $n^{\frac{2}{r-1}}$ in this $r$-state model, then the adversary can find an algorithm to successfully estimate users' data points with arbitrarily small error probability, and as a result break users' privacy.

\begin{thm}\label{r_no_thm}
	For the above $r$-state i.i.d.\ model, if $\textbf{Y}$ is the anonymized version of $\textbf{X}$, and $\textbf{W}$ is the past behavior of the users as defined above, and
	\begin{itemize}
		\item $m=cn^{\frac{2}{r-1}+\alpha}$ for any $c, \alpha >0$;
		\item $l=c'n^{\frac{2}{r-1}+\alpha}$ for any $c', \alpha>0$;
	\end{itemize}
	then, user $1$ has no privacy at time $k$.
\end{thm}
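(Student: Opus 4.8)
The plan is to follow the structure of the proof of Theorem~\ref{two_no_thm}, replacing the scalar empirical mean of a Bernoulli trace by the empirical histogram of a multinoulli trace, viewed as a vector in $\mathbb{R}^{r-1}$, and replacing the one–dimensional ``gap'' argument by a volume argument in dimension $r-1$. For $u\in\{1,\dots,n\}$ and $i\in\{1,\dots,r-1\}$ set $\overline{Y_u}(i)=\frac1m\sum_{k=1}^m \mathbf{1}[Y_u(k)=i]$ and $\overline{W_u}(i)=\frac1l\sum_{k=1}^l \mathbf{1}[W_u(k)=i]$, collected into vectors $\overline{\textbf{Y}_u},\overline{\textbf{W}_u}$ (so $\overline{\textbf{Y}_{\Pi(u)}}$ is the histogram of $\textbf{X}_u$), and take the threshold $\Delta_n=n^{-\frac{1}{r-1}-\frac{\alpha}{4}}$, which reduces to the value used in Theorem~\ref{two_no_thm} when $r=2$. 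I would then establish, for $m=cn^{\frac{2}{r-1}+\alpha}$, $l=c'n^{\frac{2}{r-1}+\alpha}$ and $n$ large, the two claims (i)~$\mathbb{P}\big(\max_i|\overline{Y_{\Pi(1)}}(i)-\overline{W_1}(i)|\le\Delta_n\big)\to1$ and (ii)~$\mathbb{P}\big(\bigcup_{u=2}^n\{\max_i|\overline{Y_{\Pi(u)}}(i)-\overline{W_1}(i)|\le\Delta_n\}\big)\to0$. On the intersection of these two high-probability events, among the observed traces exactly one lies within $\Delta_n$ of $\overline{\textbf{W}_1}$, namely $\textbf{Y}_{\Pi(1)}$, so the adversary recovers the index $\Pi(1)$ and outputs $\widetilde{X_1(k)}=Y_{\Pi(1)}(k)=X_1(k)$; hence $P^{*}_e(1,k)\le\mathbb{P}(\text{matching fails})\to0$, which is precisely no privacy.

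\textbf{First step (the matched pair is close).} For each fixed coordinate $i$, both $\mathbf{1}[X_u(k)=i]$ and $\mathbf{1}[W_u(k')=i]$ are $\mathrm{Bernoulli}(p_u(i))$, so the computation leading to~(\ref{eq_imp}) applies coordinatewise: adding and subtracting $p_u(i)$, using $|a-b|\le|a|+|b|$, a union bound over the two deviations, and Chernoff bounds gives $\mathbb{P}\big(|\overline{X_u}(i)-\overline{W_u}(i)|\ge\Delta_n\big)\le 2e^{-m\Delta_n^2/(12p_u(i))}+2e^{-l\Delta_n^2/(12p_u(i))}$. A further union bound over the $r-1$ coordinates (with $r$ a fixed constant) together with $p_u(i)\le1$ gives $\mathbb{P}\big(\max_i|\overline{X_u}(i)-\overline{W_u}(i)|\ge\Delta_n\big)\le 2(r-1)\big(e^{-m\Delta_n^2/12}+e^{-l\Delta_n^2/12}\big)$, and since $m\Delta_n^2=cn^{\alpha/2}$ and $l\Delta_n^2=c'n^{\alpha/2}$ this vanishes; specializing to $u=1$ gives (i), while the sum $\sum_{u=2}^n$ of these probabilities (still $\to0$ since exponentials beat polynomials) will be reused below.

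\textbf{Second step (mismatched pairs are far).} The genuinely new ingredient is the separation of the underlying parameters. The event $\{\max_i|p_u(i)-p_1(i)|\le 4\Delta_n\}$ places $\textbf{p}_u$ in an axis–aligned cube of side $8\Delta_n$ about $\textbf{p}_1$ in $\mathbb{R}^{r-1}$, of Lebesgue measure $(8\Delta_n)^{r-1}$; since $f_{\textbf{P}}<\delta_2$, its probability is at most $\delta_2(8\Delta_n)^{r-1}$, so by a union bound $\mathbb{P}\big(\bigcup_{u=2}^n\{\max_i|p_u(i)-p_1(i)|\le4\Delta_n\}\big)\le \delta_2\, 8^{r-1}\, n\Delta_n^{r-1}=\delta_2\, 8^{r-1}\, n^{-\frac{(r-1)\alpha}{4}}\to0$. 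Coordinatewise Chernoff bounds as in~(\ref{Wu_Pu}) also give $\max_i|\overline{W_u}(i)-p_u(i)|\le\Delta_n$ for all $u$ simultaneously with probability $\to1$. On the intersection of these events, every $u\ge2$ satisfies $\max_i|\overline{W_u}(i)-\overline{W_1}(i)|>2\Delta_n$ by the triangle inequality, whence $\{\max_i|\overline{X_u}(i)-\overline{W_1}(i)|\le\Delta_n\}\subseteq\{\max_i|\overline{X_u}(i)-\overline{W_u}(i)|>\Delta_n\}$, whose total probability over $u=2,\dots,n$ was shown to vanish in the first step; this yields (ii).

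\textbf{Anticipated obstacle.} The one delicate point is the simultaneous calibration of $\Delta_n$: the first step needs $m\Delta_n^2\to\infty$ (so $\Delta_n$ must not be too small), while the separation step needs $n\Delta_n^{r-1}\to0$ (so $\Delta_n$ must not be too large), and these are jointly feasible exactly when $m,l\gg n^{2/(r-1)}$ — precisely the hypothesis — with the exponent $\frac{1}{r-1}+\frac{\alpha}{4}$ forced by this trade-off. Reducing the multinoulli tail to $r$ Bernoulli tails via a coordinatewise union bound is harmless since $r$ is a fixed constant.
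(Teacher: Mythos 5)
Your proposal is correct and follows essentially the same route as the paper: the paper's own argument for Theorem~\ref{r_no_thm} is exactly the sketch you describe — per-state empirical frequencies collected into $(r-1)$-dimensional vectors, the threshold $\Delta'_n=n^{-(\frac{1}{r-1}+\frac{\alpha}{4})}$, and the two-step closeness/separation claims carried over from Theorem~\ref{two_no_thm}. The details you supply (coordinatewise Chernoff bounds with a union bound over the fixed number of states, and the $(r-1)$-dimensional cube volume bound $\delta_2(8\Delta_n)^{r-1}$ giving $n\Delta_n^{r-1}\to 0$ for parameter separation) are precisely the intended completion of that sketch.
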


\begin{proof}
The proof of Theorem \ref{r_no_thm} is similar to the proof of Theorem \ref{two_no_thm}, so we just provide the general idea. We similarly define the empirical probability that the user with pseudonym $u$ has data sample $i$ as follows:

\[
\overline{Y_u(i)}=\frac{\abs {\left\{k \in \{1, 2, \cdots, m\}:Y_u(k)=i\right\}}}{m},
\]
and
\[
\overline{Y_{\Pi(u)}(i)}=\frac{\abs {\left\{k \in \{1, 2, \cdots, m\}:X_u(k)=i\right\}}}{m}.
\]
We also have 
\[
\overline{W_u(i)}=\frac{\abs {\left\{k \in \{1, 2, \cdots, l\}:W_u(k)=i\right\}}}{l}.
\]
The difference from the proof of Theorem \ref{two_no_thm} is that, for each $u \in \{1,2,\cdots, n \}$, $\overline{\textbf{Y}_{u}}$ and $\overline{\textbf{W}_{u}}$ are vectors of length $r-1$. In other words,
\[ \overline{\textbf{Y}_u} =\left[\overline{Y_u(1)}, \overline{Y_u(2)}, \cdots, \overline{Y_u(r-1)}\right]^T, \ \ \ \ u\in \{1, 2, \cdots, n\},\]
\[ \overline{\textbf{W}_u} =\left[\overline{W_u(1)}, \overline{W_u(2)}, \cdots, \overline{W_u(r-1)}\right]^T, \ \ \ \ u\in \{1, 2, \cdots, n\},\]
and we claim for $m = cn^{\frac{2}{r-1} + \alpha}$, $l = c'n^{\frac{2}{r-1}+\alpha}$, and large enough $n$,
\begin{enumerate}
	\item $\mathbb{P}\left(\ \abs{\overline{{\textbf{Y}}_{\Pi(1)}}-\overline{{\textbf{W}}_1}}\leq \Delta'_n\right) \to 1,$
	\item $\mathbb{P}\left( \bigcup\limits_{u=2}^n \left\{\ \abs{\overline{{\textbf{Y}}_{\Pi(u)}}-\overline{{\textbf{W}}_1}}\leq \Delta'_n \right\}\right) \to 0,$
\end{enumerate}
where $\Delta'_n=n^{-\left(\frac{1}{r-1}+\frac{\alpha}{4}\right)}$.
\end{proof}

\section{$r$-State Markov Chain Model}
\label{sec:markov}
In Section \ref{Two-state} and \ref{r-state}, the data trace of each user is governed by an i.i.d.\ random process, while here the data trace of each user is governed by an
irreducible and aperiodic $r$-state Markov chain where $E$ is the set of edges. Let us define the transition probability from state $i$ to state $j$ as:
$$p_u(i,j)=\mathbb{P}\left(X_u(k+1)=j|X_u(k)=i\right);$$
thus, $(i, j) \in E$ if and only if $p_u(i,j)>0$.

Here, we assume the same Markov chain structure for all of the users, but different users have different transition matrices. Note that a subset of the transition probabilities with size $|E|-r$ is sufficient for recovering the whole transition matrix. Let this subset be called $\textbf{p}_u$, so
\[\textbf{p}_u = \left[
p_u(1), p_u(2), \cdots p_u(|E|-r)\right]^T , \ \ \ \textbf{p} =\left[\textbf{p}_{1}, \textbf{p}_{2}, \cdots, \textbf{p}_{n}\right].
\]
where $p_u(i)$ is the probability that a datum of user $u$ has value $i$.
As discussed in Section \ref{sec:framework}, while $\textbf{p}_u$'s are unknown to the adversary, they are drawn independently from a known continuous density function $f_\textbf{P}(\textbf{x})$, where for all $\textbf{x }\in \mathcal{R}_{\textbf{p}}$, 
\begin{align}
\no \mathcal{R}_{\textbf{p}} &= \bigg{\{} (x_1,x_2, \cdots, x_{|E|-r}) \in (0,1)^{|E|-r}:\\
\no &\ \ \ x_i > 0 , x_1+x_2+\cdots+x_{|E|-r} < 1,\ i=1, 2, \cdots, |E|-r\bigg{\}},
\end{align}
we have 
\begin{align}
0 < \delta_1 <f_{{\textbf{P}}}(\textbf{x}) < \delta_2<\infty.\ \
\end{align}

\subsection{Perfect Anonymity Analysis}
\label{markov_perfect}

The following theorem states that if $m$ or $l$ are significantly smaller than $n^{\frac{2}{|E|-r}}$ in this $r$-state Markov chain model, then all users have perfect anonymity.

\begin{thm}\label{markov_perfect_thm}
	For the above $r$-state Markov chain model, if $\textbf{Y}$ is the anonymized version of $\textbf{X}$, and $\textbf{W}$ is the past behavior of the users as defined above, and
	\begin{itemize}
		\item at least one of $m$ or $l$ is less than or equal to $cn^{\frac{2}{|E|-r}-\alpha}$ for any $c, \alpha>0$;
	\end{itemize}
	then, user $1$ has perfect anonymity at time $k$.
\end{thm}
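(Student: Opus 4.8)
The plan is to mimic exactly the argument used for Theorem~\ref{two_perfect_thm} and Theorem~\ref{r_perfect_thm}, reducing the Markov-chain statement to the corresponding perfect-anonymity result for the case where the adversary knows the parameter vectors $\textbf{P}$. First I would assume without loss of generality that $m \leq l$, and designate $\textbf{W}$ as the training set and $\textbf{Y}$ as the observed set, so that conditioned on $\textbf{Y}$ the chain $\textbf{W}\rightarrow\textbf{P}\rightarrow\Pi(1)$ holds --- here the key point is that each user's sufficient statistic for the Markov model is still the parameter vector $\textbf{p}_u$ (a coordinate subset of the transition matrix of size $|E|-r$), so $\textbf{W}$ and $\Pi(1)$ are conditionally independent given $\textbf{P}$ and $\textbf{Y}$. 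Applying the data-processing inequality to $I(\Pi(1);\textbf{W}\,|\,\textbf{Y}) \leq I(\Pi(1);\textbf{P}\,|\,\textbf{Y})$ and rewriting in terms of conditional entropies gives $H(\Pi(1)\,|\,\textbf{W},\textbf{Y}) \geq H(\Pi(1)\,|\,\textbf{P},\textbf{Y})$, exactly as in the two-state case.

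Next I would invoke the known result for the genie-aided setting: when the adversary possesses the true transition parameters, perfect anonymity holds as long as $m$ (or $l$) grows more slowly than $n^{2/(|E|-r)}$; this is the Markov-chain analogue of \cite[Theorem~1]{tifs2016} and \cite[Theorem~2]{tifs2016}, and the exponent $2/(|E|-r)$ appears precisely because distinguishing users now requires resolving $|E|-r$ free transition probabilities rather than $r-1$ multinoulli probabilities. Substituting $m = cn^{2/(|E|-r)-\alpha}$ into that result yields $H(\Pi(1)\,|\,\textbf{P},\textbf{Y}) \to +\infty$, and the entropy inequality above then forces $H(\Pi(1)\,|\,\textbf{W},\textbf{Y}) \to +\infty$ as $n\to\infty$, which is perfect anonymity for user~$1$. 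The case $l \leq m$ follows by the same symmetry argument used in Theorem~\ref{two_perfect_thm}: swap the roles of $\textbf{Y}$ and $\textbf{W}$ as observed and training data.

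The main obstacle I anticipate is justifying that the empirical transition counts of a finite, irreducible, aperiodic Markov chain concentrate around the true transition probabilities at a rate fast enough to reproduce the Chernoff-type bounds underlying \cite{tifs2016}; unlike the i.i.d.\ case one must handle dependence, typically via the regeneration/return-time structure of the chain or a McDiarmid-style bound for functions of Markov trajectories, and one must also confirm that the stationary distribution (and hence the expected number of visits to each state) is bounded away from zero uniformly over the support of $f_\textbf{P}$, which is where the assumptions $0 < \delta_1 < f_\textbf{P}(\textbf{x}) < \delta_2 < \infty$ and irreducibility/aperiodicity enter. Provided the cited theorem already packages this concentration for the genie-aided Markov model, the present proof is a short reduction; I would therefore write it as: ``We repeat the reasoning of Theorem~\ref{two_perfect_thm}, and the proof is completed by applying the Markov-chain counterpart of \cite[Theorem~2]{tifs2016}.''
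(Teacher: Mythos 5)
Your proposal matches the paper's proof: the paper likewise reduces the statement to the two-state argument (the Markov chain $\textbf{W}\rightarrow\textbf{P}\rightarrow\Pi(1)$ given $\textbf{Y}$ plus data processing) and then invokes the genie-aided Markov-chain result, namely \cite[Theorem 3]{tifs2016}, which is exactly the ``Markov-chain counterpart'' you describe. Your added remarks on concentration of empirical transition counts are not needed here since that machinery is packaged inside the cited theorem, just as you anticipated.
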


\begin{proof}
	We can now repeat the similar reasoning as Theorem \ref{two_perfect_thm}; then, by using~\cite[Theorem 3]{tifs2016}, the proof is complete.
\end{proof}

\subsection{No Privacy Analysis}
\label{markov_no}

The following theorem states that if both $m$ and $l$ are significantly larger than $n^{\frac{2}{|E|-r}}$, then the adversary can find an algorithm to successfully estimate users' data points with arbitrarily small error probability, and as a result, break users' privacy.

\begin{thm}\label{markov_no_thm}
	For the above $r$-state Markov chain model, if $\textbf{Y}$ is the anonymized version of $\textbf{X}$, and $\textbf{W}$ is the past behavior of the users as defined above, and
	\begin{itemize}
		\item $m=cn^{\frac{2}{|E|-r}+\alpha}$ for any $c, \alpha >0$;
		\item $l=c'n^{\frac{2}{|E|-r}+\alpha}$ for any $c', \alpha>0$;
	\end{itemize}
	then, user $1$ has no privacy at time $k$.
\end{thm}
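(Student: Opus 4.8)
The plan is to follow the template of the proof of Theorem~\ref{two_no_thm} and its $r$-state extension, replacing empirical symbol frequencies by empirical transition frequencies. Write $d=|E|-r$. For the length-$m$ trace carried by pseudonym $u$, let $N^{Y}_u(i)$ be the number of indices $k\le m-1$ with $Y_u(k)=i$, let $N^{Y}_u(i,j)$ be the number of indices $k\le m-1$ with $Y_u(k)=i$ and $Y_u(k+1)=j$, and set the empirical transition probability $\overline{p^{Y}_u(i,j)}=N^{Y}_u(i,j)/N^{Y}_u(i)$; define $\overline{p^{W}_u(i,j)}$ from the length-$l$ training trace analogously, and $\overline{p^{X}_u(i,j)}$ from $\mathbf{X}_u$, so that $\overline{p^{X}_u(i,j)}=\overline{p^{Y}_{\Pi(u)}(i,j)}$. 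Collecting the $d$ free coordinates of each into vectors $\overline{\mathbf{p}^{Y}_u},\overline{\mathbf{p}^{W}_u},\overline{\mathbf{p}^{X}_u}\in(0,1)^{d}$, and setting $\Delta''_n=n^{-\left(\frac{1}{d}+\frac{\alpha}{4}\right)}$, I would show that for $m=cn^{\frac{2}{d}+\alpha}$, $l=c'n^{\frac{2}{d}+\alpha}$ and large $n$,
\begin{align}
\no &\mathbb{P}\left(\abs{\overline{\mathbf{p}^{X}_1}-\overline{\mathbf{p}^{W}_1}}\le\Delta''_n\right)\to 1,\\
\no &\mathbb{P}\left(\bigcup_{u=2}^{n}\left\{\abs{\overline{\mathbf{p}^{X}_u}-\overline{\mathbf{p}^{W}_1}}\le\Delta''_n\right\}\right)\to 0.
\end{align}
Granting these, the adversary matches $\mathbf{W}_1$ to the unique pseudonym $v$ with $\abs{\overline{\mathbf{p}^{Y}_v}-\overline{\mathbf{p}^{W}_1}}\le\Delta''_n$, which equals $\Pi(1)$ with probability tending to $1$; since $\mathbf{Y}_{\Pi(1)}=\mathbf{X}_1$, the adversary then reads off $X_1(k)$ exactly, so $P^{*}_e(1,k)\to 0$.

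\textit{\textbf{First step (concentration):}} Exactly as in~(\ref{eq_imp}), the first claim follows once I show $\mathbb{P}\left(\abs{\overline{\mathbf{p}^{X}_u}-\mathbf{p}_u}\ge\frac{\Delta''_n}{2}\right)\to 0$ together with the same statement for $\overline{\mathbf{p}^{W}_u}$, and then combine by the triangle inequality and a union bound over coordinates. The new ingredient, absent in the i.i.d.\ setting, is a concentration bound for the empirical transition probabilities of a finite, irreducible, aperiodic chain, which I would get in two layers: (i) the occupation counts concentrate, $N^{X}_u(i)\ge\frac{1}{2}m\,\pi_u(i)$ with probability at least $1-e^{-c_u m}$, where $\pi_u$ is the stationary distribution and $c_u>0$ depends on the stationary distribution and spectral gap of user $u$'s chain; and (ii) by the strong Markov property, conditioned on $N^{X}_u(i)=\nu$ the observed successors of state $i$ behave like $\nu$ i.i.d.\ draws from $(p_u(i,j))_{j}$ up to a single boundary term, so a Chernoff bound gives $\mathbb{P}\left(\abs{\overline{p^{X}_u(i,j)}-p_u(i,j)}\ge\delta\mid N^{X}_u(i)=\nu\right)\le 2e^{-\nu\delta^2/3}$. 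Chaining (i) and (ii) with $\delta$ a fixed multiple of $\Delta''_n$ and taking a union bound over the finitely many $(i,j)\in E$ yields $\mathbb{P}\left(\abs{\overline{\mathbf{p}^{X}_u}-\mathbf{p}_u}\ge\frac{\Delta''_n}{2}\right)\le C_r\,e^{-\kappa_u m(\Delta''_n)^2}$ for a constant $C_r$ and some $\kappa_u>0$; since $m(\Delta''_n)^2=c\,n^{\alpha/2}$ this tends to $0$, and likewise for $\overline{\mathbf{p}^{W}_u}$ using $l(\Delta''_n)^2=c'n^{\alpha/2}$.

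\textit{\textbf{Second step (separation):}} For the second claim I would first show that $\abs{\mathbf{p}_u-\mathbf{p}_1}>4\Delta''_n$ for all $u\ge 2$ with probability $\to 1$: since $f_{\mathbf{P}}<\delta_2$ on $\mathcal{R}_{\mathbf{p}}\subset(0,1)^{d}$, the probability that $\mathbf{p}_u$ lies in the radius-$4\Delta''_n$ ball about $\mathbf{p}_1$ is $O\left((\Delta''_n)^{d}\right)$, so by the union bound $\mathbb{P}\left(\bigcup_{u=2}^{n}\left\{\abs{\mathbf{p}_u-\mathbf{p}_1}\le 4\Delta''_n\right\}\right)=O\left(n(\Delta''_n)^{d}\right)=O\left(n^{-d\alpha/4}\right)\to 0$. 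On the complement, combining with the first-step bounds applied to $\overline{\mathbf{p}^{W}_1}$ (within $\Delta''_n$ of $\mathbf{p}_1$) and to $\overline{\mathbf{p}^{X}_u}$ (within $\Delta''_n$ of $\mathbf{p}_u$, hence farther than $2\Delta''_n$ from $\overline{\mathbf{p}^{W}_1}$), the triangle inequality gives $\mathbb{P}\left(\abs{\overline{\mathbf{p}^{X}_u}-\overline{\mathbf{p}^{W}_1}}\le\Delta''_n\right)\le C_r\,e^{-\kappa\,n^{\alpha/2}}$ with $\kappa>0$ uniform in $u$, and a final union bound over $u=2,\dots,n$ sends the probability in the second claim to $0$ because the exponential dominates the factor $n$.

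\textit{\textbf{Main obstacle:}} The genuinely new difficulty is making the Markov concentration exponent uniform in $u$: the constant $\kappa_u$ above degrades as $\min_i\pi_u(i)$ or the spectral gap tends to $0$, which can happen because $\mathcal{R}_{\mathbf{p}}$ reaches the boundary of the simplex. I therefore expect to need an auxiliary union bound showing that, with probability $\to 1$, every one of the $n$ chains has $\min_i\pi_u(i)\ge n^{-\beta}$ and spectral gap $\ge n^{-\beta}$ for a small $\beta>0$ (again using the bounded density and the fact that these quantities are smooth functions of $\mathbf{p}_u$ away from the boundary), so that $\kappa_u m(\Delta''_n)^2$ still grows like a positive power of $n$ once $\beta$ is small relative to $\alpha$. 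Carrying out this uniformization cleanly---in place of the one-line bound $p_u\le 1$ that sufficed in Theorem~\ref{two_no_thm}---is where the real work lies; the rest of the argument transcribes directly from the proof of Theorem~\ref{two_no_thm}.
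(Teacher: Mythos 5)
Your overall architecture is exactly the paper's: the paper proves this theorem only as a sketch, asserting (with the same $\Delta''_n=n^{-(\frac{1}{|E|-r}+\frac{\alpha}{4})}$ and the same two claims) that the argument of Theorem~\ref{two_no_thm} carries over with $\overline{\textbf{Y}_u},\overline{\textbf{W}_u}$ replaced by length-$(|E|-r)$ vectors of estimated transition parameters, which is precisely your matching rule and your two convergence statements. Your first step, second step, and the separation bound $n(\Delta''_n)^{|E|-r}=n^{-(|E|-r)\alpha/4}\to 0$ all transcribe the paper's Theorem~\ref{two_no_thm} computation correctly, and your observation that exits from a fixed state are i.i.d.\ multinoulli draws (so that layer (ii) needs no mixing-time input, only a lower bound on the occupation count $N^{X}_u(i)$) is the right way to localize the new difficulty.

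The genuine gap is in the uniformization you flag at the end, and as sketched it does not close. To guarantee that \emph{all} $n$ users satisfy $\min_i\pi_u(i)\ge n^{-\beta}$ (and a spectral-gap or return-time analogue), the single-user probability of violation is only $O(n^{-\beta})$ under the bounded-density assumption (the bad set is a neighborhood of a codimension-one piece of the boundary of $\mathcal{R}_{\textbf{p}}$), so the union bound over $n$ users forces $\beta>1$, not ``a small $\beta>0$.'' With $\beta>1$ the worst user may have $\pi_u(i)\approx n^{-\beta}$, hence $N^{X}_u(i)\lesssim m\,n^{-\beta}$ and a concentration exponent of order $m\,n^{-\beta}(\Delta''_n)^2=c\,n^{\alpha/2-\beta}$, which tends to $0$ rather than $\infty$ whenever $\alpha<2\beta$; since the theorem is claimed for every $\alpha>0$, your argument as written only covers $\alpha$ large. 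So either you must show that confusion with user $1$ is unlikely even when some states of user $u$ are visited too rarely for their empirical transition frequencies to concentrate (an anti-concentration/small-ball argument for coarse empirical estimates, or a matching statistic that avoids conditional frequencies for rarely visited states), or you need a sharper bound than the crude ``distance to boundary'' estimate on how often $\min_i\pi_u(i)$ can be small among $n$ draws. The paper does not resolve this either---its proof is a one-paragraph sketch that never defines the Markov empirical estimates or addresses uniformity over users---but a complete proof along your lines must supply this step, and the fix you propose is quantitatively inconsistent with arbitrary $\alpha>0$.
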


\begin{proof}
	The proof of Theorem \ref{markov_no_thm} is similar to the proof of Theorem \ref{two_no_thm}, so we just provide the general idea. For each $u \in \{1,2,\cdots, n \}$, we similarly define $\overline{\textbf{Y}_{u}}$ and $\overline{\textbf{W}_{u}}$ as vectors of length $|E|-r$:
	\[ \overline{\textbf{Y}_u} =\left[\overline{Y_u(1)}, \overline{Y_u(2)}, \cdots, \overline{Y_u(|E|-r)}\right]^T, \ \ \ \ u\in \{1, 2, \cdots, n\}.\]
	\[ \overline{\textbf{W}_u} =\left[\overline{W_u(1)}, \overline{W_u(2)}, \cdots, \overline{W_u(|E|-r)}\right]^T, \ \ \ \ u\in \{1, 2, \cdots, n\}.\]
	We claim that for $m = cn^{\frac{2}{|E|-r} + \alpha}$, $l = c'n^{\frac{2}{|E|-r}+\alpha}$, and large enough $n$,
	\begin{enumerate}
		\item $\mathbb{P}\left(\ \abs{\overline{{\textbf{Y}}_{\Pi(1)}}-\overline{{\textbf{W}}_1}}\leq \Delta''_n\right) \to 1,$
		\item $\mathbb{P}\left( \bigcup\limits_{u=2}^n \left\{\ \abs{\overline{{\textbf{Y}}_{\Pi(u)}}-\overline{{\textbf{W}}_1}}\leq \Delta''_n \right\}\right) \to 0,$
	\end{enumerate}
	where $\Delta''_n=n^{-\left(\frac{1}{|E|-r}+\frac{\alpha}{4}\right)}$.
\end{proof}


\section{Conclusion}
\label{conclusion}

In this paper, we have derived the theoretical bounds on user privacy in situations in which user traces are matchable to prior user behavior despite anonymization protection. In particular, the adversary employs statistical matching of the user traces to previous behavior of users within a network to compromise their privacy.

As shown in Figure \ref{fig:result}, which displays the characterized privacy limits for the i.i.d.\ case, we demonstrated that the parameter plane, with coordinates length of learning set ($l$) and length of observed set ($m$), can be divided into two regions: in the first region, all users have perfect anonymity and in the second region no user has any privacy whatsoever. Specifically, we showed that if either $l$ or $m$ is significantly smaller than $n^{\frac{2}{r-1}}$, users have perfect anonymity and the adversary cannot identify the permutation function $\left(\mathbf{\Pi}\right)$, and, if both of them are significantly larger than $n^{\frac{2}{r-1}}$, users have no privacy. It is worth noting that in the case the adversary has the accurate prior information, which is discussed in \cite{tifs2016,ciss2017} and is shown in Figure \ref{fig:result2}, users have no privacy as long as number of adversary observations per user $m$ is larger than $n^{\frac{2}{r-1}}$.

For the case where the users' data points are governed by an
irreducible and aperiodic $r$-state Markov chain with $|E|$ edges, we demonstrated similar results: if either $l$ or $m$ is significantly smaller than $n^{\frac{2}{|E|-r}}$, users have perfect anonymity, and, if both of them are significantly larger than $n^{\frac{2}{|E|-r}}$, users have no privacy.

\begin{figure}[h]
	\centering
	\includegraphics[width =0.75\linewidth]{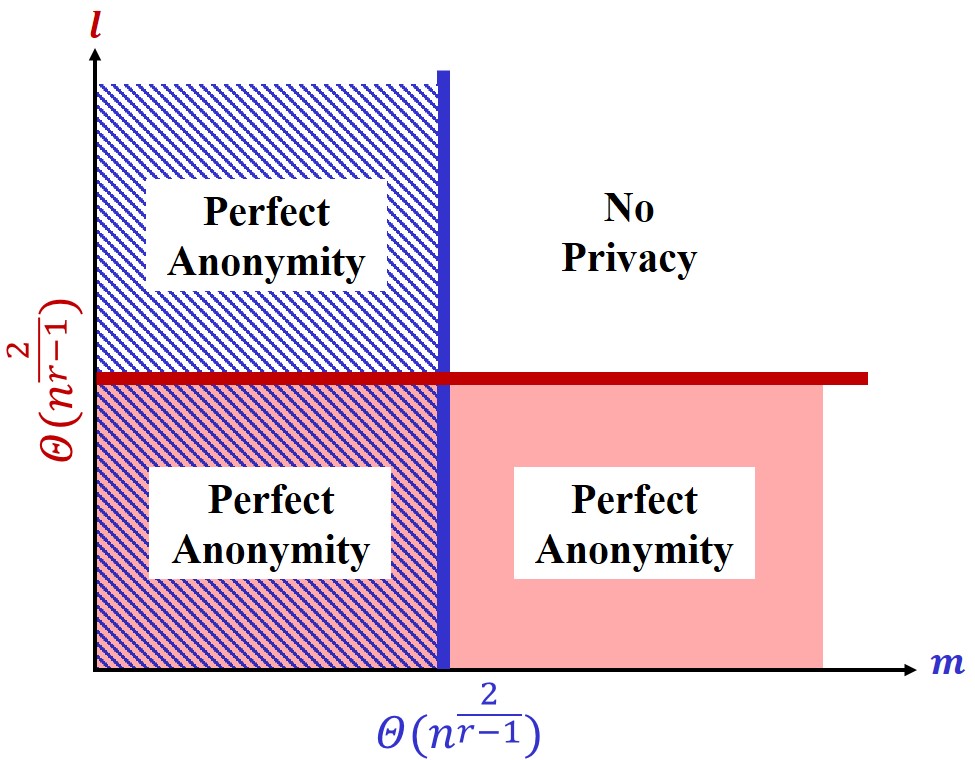}
	\caption{Limits of privacy in the entire $m-l$ plane in the case the adversary does not have the accurate prior distribution. Here, both training data traces and observed data traces are governed by an i.i.d.\ multinoulli distribution. $l$ is the length of the learning set, $m$ is the length of the observed data, and $r$ is the number of possible values for each user's data point.}
	\label{fig:result}
\end{figure}
\begin{figure}[h]
	\centering
	\includegraphics[width =0.75\linewidth]{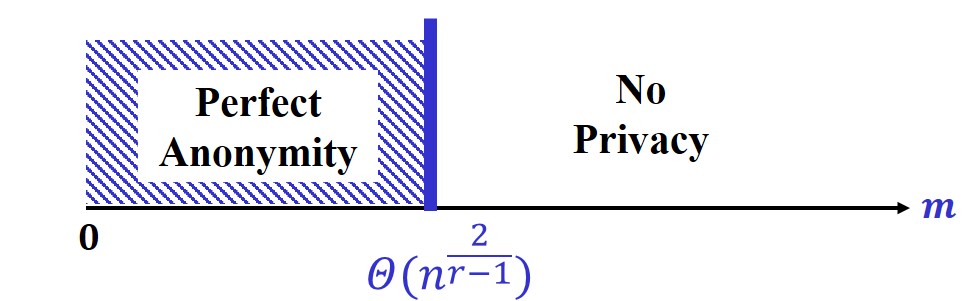}
	\caption{Limits of privacy in the case the adversary has an accurate prior distribution. Here, the observed data traces are governed by an i.i.d.\ multinoulli distribution, $m$ is the length of the observed data and $r$ is the number of possible values for each user's data point.}
	\label{fig:result2}
\end{figure}

\appendices

\bibliographystyle{IEEEtran}
\bibliography{REF,dennis_privacy}

\begin{thebibliography}{10}
\providecommand{\url}[1]{#1}
\csname url@samestyle\endcsname
\providecommand{\newblock}{\relax}
\providecommand{\bibinfo}[2]{#2}
\providecommand{\BIBentrySTDinterwordspacing}{\spaceskip=0pt\relax}
\providecommand{\BIBentryALTinterwordstretchfactor}{4}
\providecommand{\BIBentryALTinterwordspacing}{\spaceskip=\fontdimen2\font plus
\BIBentryALTinterwordstretchfactor\fontdimen3\font minus
  \fontdimen4\font\relax}
\providecommand{\BIBforeignlanguage}[2]{{%
\expandafter\ifx\csname l@#1\endcsname\relax
\typeout{** WARNING: IEEEtran.bst: No hyphenation pattern has been}%
\typeout{** loaded for the language `#1'. Using the pattern for}%
\typeout{** the default language instead.}%
\else
\language=\csname l@#1\endcsname
\fi
#2}}
\providecommand{\BIBdecl}{\relax}
\BIBdecl

\bibitem{IoT2}
\BIBentryALTinterwordspacing
J.~Bausch. (2016) The internet of things forecast of 50 billion connected
  devices by 2020 is grossly over-estimated and entirely misleading. [Online].
  Available:
  \url{https://www.electronicproducts.com/Internet_of_Things/Research/The_Internet_of_Things_forecast_of_50_billion_connected_devices_by_2020_is_grossly_over_estimated_and_entirely_misleading.aspx}
\BIBentrySTDinterwordspacing

\bibitem{FTC2015}
{Federal Trade Commission Staff}, ``Internet of things: Privacy and security in
  a connected world,'' 2015.

\bibitem{3ukil2014iot}
A.~Ukil, S.~Bandyopadhyay, and A.~Pal, ``{I}o{T}-privacy: To be private or not
  to be private,'' in \emph{IEEE Conference on Computer Communications
  Workshops (INFOCOM WKSHPS)}.\hskip 1em plus 0.5em minus 0.4em\relax Toronto,
  ON, Canada: IEEE, 2014, pp. 123--124.

\bibitem{4Hosseinzadeh2014}
S.~Hosseinzadeh, S.~Rauti, S.~Hyrynsalmi, and V.~Lepp{\"a}nen, ``Security in
  the internet of things through obfuscation and diversification,'' in
  \emph{IEEE Conference on Computing, Communication and Security
  (ICCCS)}.\hskip 1em plus 0.5em minus 0.4em\relax Pamplemousses, Mauritius:
  IEEE, 2015, pp. 1--5.

\bibitem{hoh2005protecting}
B.~Hoh and M.~Gruteser, ``Protecting location privacy through path confusion,''
  in \emph{First International Conference on Security and Privacy for Emerging
  Areas in Communications Networks (SecureComm)}.\hskip 1em plus 0.5em minus
  0.4em\relax Pamplemousses, Mauritius: IEEE, 2005, pp. 194--205.

\bibitem{freudiger2007mix}
J.~Freudiger, M.~Raya, M.~F{\'e}legyh{\'a}zi, P.~Papadimitratos, and J.~P.
  Hubaux, ``Mix-zones for location privacy in vehicular networks,'' Vancouver,
  2007.

\bibitem{Naini2016}
F.~M. Naini, J.~Unnikrishnan, P.~Thiran, and M.~Vetterli, ``Where you are is
  who you are: User identification by matching statistics,'' \emph{IEEE
  Transactions on Information Forensics and Security}, vol.~11, no.~2, pp.
  358--372, 2016.

\bibitem{soltani2017towards}
R.~Soltani, D.~Goeckel, D.~Towsley, and A.~Houmansadr, ``Towards provably
  invisible network flow fingerprints,'' in \emph{51th Asilomar Conference on
  Signals, Systems and Computers}, Pacific Grove, CA, USA, 2017.

\bibitem{soltani2018invisible}
R.~Soltani, D.~Goeckel, D.~F. Towsley, and A.~Houmansadr, ``Fundamental limits
  of invisible flow fingerprinting,'' \emph{CoRR}, vol. abs/1809.08514, 2018.

\bibitem{shokri2012protecting}
R.~Shokri, G.~Theodorakopoulos, C.~Troncoso, J.-P. Hubaux, and J.-Y. Le~Boudec,
  ``Protecting location privacy: optimal strategy against localization
  attacks,'' in \emph{Proceedings of the 2012 ACM conference on Computer and
  communications security}.\hskip 1em plus 0.5em minus 0.4em\relax ACM, 2012,
  pp. 617--627.

\bibitem{gruteser2003anonymous}
M.~Gruteser and D.~Grunwald, ``Anonymous usage of location-based services
  through spatial and temporal cloaking,'' in \emph{Proceedings of the 1st
  international conference on Mobile systems, applications and services}.\hskip
  1em plus 0.5em minus 0.4em\relax San Francisco, California, USA: ACM, 2003,
  pp. 31--42.

\bibitem{bordenabe2014optimal}
N.~E. Bordenabe, K.~Chatzikokolakis, and C.~Palamidessi, ``Optimal
  geo-indistinguishable mechanisms for location privacy,'' in \emph{Proceedings
  of the 2014 ACM SIGSAC Conference on Computer and Communications
  Security}.\hskip 1em plus 0.5em minus 0.4em\relax Scottsdale, Arizona, USA:
  ACM, 2014, pp. 251--262.

\bibitem{matching}
J.~Unnikrishnan, ``Asymptotically optimal matching of multiple sequences to
  source distributions and training sequences,'' \emph{IEEE Transactions on
  Information Theory}, vol.~61, no.~1, pp. 452--468, 2014.

\bibitem{nazanin_ISIT2017}
N.~Takbiri, A.~Houmansadr, D.~L. Goeckel, and H.~Pishro-Nik, ``Limits of
  locatin privacy under anonymization and obfuscation,'' in \emph{International
  Symposium on Information Theory (ISIT)}.\hskip 1em plus 0.5em minus
  0.4em\relax Aachen, Germany: IEEE, 2017, pp. 764--768.

\bibitem{tifs2016}
Z.~Montazeri, A.~Houmansadr, and H.~Pishro-Nik, ``{Achieving Perfect Location
  Privacy in Wireless Devices Using Anonymization},'' \emph{IEEE Transaction on
  Information Forensics and Security}, vol.~12, no.~11, pp. 2683--2698, 2017.

\bibitem{ciss2017}
N.~Takbiri, A.~Houmansadr, D.~Goeckel, and H.~Pishro-Nik, ``Fundamental limits
  of location privacy using anonymization,'' in \emph{51st Annual Conference on
  Information Science and Systems (CISS)}.\hskip 1em plus 0.5em minus
  0.4em\relax Baltimore, MD, USA: IEEE, 2017.

\bibitem{Nazanin_IT}
N.~Takbiri, A.~Houmansadr, D.~L. Goeckel, and H.~Pishro{-}Nik, ``Matching
  anonymized and obfuscated time series to users' profiles,'' \emph{IEEE
  Transactions on Information Theory}, vol.~65, no.~2, pp. 724--741, 2019.

\bibitem{nazanin_ISIT2018}
N.~Takbiri, A.~Houmansadr, D.~L. Goeckel, and H.~Pishro-Nik, ``Privacy against
  statistical matching: Inter-user correlation,'' in \emph{International
  Symposium on Information Theory (ISIT)}.\hskip 1em plus 0.5em minus
  0.4em\relax Vail, Colorado, USA: IEEE, 2018, pp. 1036--1040.

\bibitem{ISIT18-longversion}
N.~Takbiri, A.~Houmansadr, D.~L. Goeckel, and H.~Pishro{-}Nik, ``Privacy of
  dependent users against statistical matching,'' \emph{submitted to IEEE
  Transactions on Information Theory, Available at
  https://arxiv.org/abs/1710.00197}.

\bibitem{Nazanin_WCNC2019}
N.~Takbiri, R.~Soltani, D.~Goeckel, A.~Houmansadr, and H.~Pishro-Nik,
  ``Asymptotic loss in privacy due to dependency in gaussian traces,'' in
  \emph{IEEE Wireless Communications and Networking Conference (WCNC)}.\hskip
  1em plus 0.5em minus 0.4em\relax Marrakech, Morocco: IEEE, 2019.

\bibitem{KeConferance2017}
K.~Li, H.~Pishro-Nik, and D.~Goeckel, ``Bayesian time series matching and
  privacy,'' in \emph{51th Asilomar Conference on Signals, Systems and
  Computers}, Pacific Grove, CA, USA, 2017.

\end{thebibliography}
%
%
%
%


\end{document}